\definecolor{dullmagenta}{rgb}{0.4,0,0.4}   
\definecolor{darkblue}{rgb}{0,0,0.4}
\def\T{{\footnotesize {^{_{\sf T}}}}} 
\newcommand{\Real}{{\rm I}\negthinspace {\rm R}}
\newtheorem{theorem}{Theorem}[section]
\newtheorem{lemma}[theorem]{Lemma}
\newenvironment{proof}[1][Proof]{\begin{trivlist}
\item[\hskip \labelsep {\bfseries #1}]}{\end{trivlist}}
\begin{document}
\bibliographystyle{rss}

\title{Robust approximate Bayesian inference}
\author{Erlis Ruli, Nicola Sartori and Laura Ventura \\
{\it {\small Department of Statistical Sciences, University of Padova, Italy}} \\
{\tt {\small ruli@stat.unipd.it, sartori@stat.unipd.it, ventura@stat.unipd.it}} }
\maketitle

\begin{abstract}
We discuss an approach for deriving robust posterior distributions from $M$-estimating functions using Approximate Bayesian Computation (ABC) methods. In particular, we use \emph{M}-estimating functions to construct suitable summary statistics in ABC algorithms. The theoretical properties of the robust posterior distributions are discussed. Special attention is given to the application of the method to linear mixed models. Simulation results and an application to a clinical study demonstrate the usefulness of the method. An \texttt{R} implementation is also provided in the \texttt{robustBLME} package.
\end{abstract}

\noindent {\em Keywords:} Influence function; likelihood-free inference; $M$-estimators; quasi-likelihood; robustness; unbiased estimating function.

\section{Introduction}
\label{sec:intro}

The normality assumption is the usual basis of many statistical analyses in several
fields, such as medicine, health sciences, quality control and engineering statistics. Under this assumption, standard parametric estimation and testing procedures are
simple and efficient. However, both from a frequentist or a Bayesian perspective, it is well known that these
procedures are not robust when the normal distribution is just an
approximate model or in the presence of outliers in the
observed data. In these situations, robust statistical methods can be considered in order to produce statistical procedures that are stable with respect to small changes in the data or to small model departures; see \cite{huber2009robust} for a review on robust methods.

The concept of robustness has been widely discussed in the frequentist literature; see, for instance, \cite{hampel1986,tsou1995robust} and \cite{markatou1998wieg}. Also Bayesian robustness with respect to model misspecification have attracted considerable attention. For instance, \cite{lazar03}, \cite{greco2008robust}, \cite{vcr2010} and \cite{agostinelli2013} discuss approaches based on robust pseudo-likelihood functions, such as the empirical likelihood, as replacement of the genuine likelihood in Bayes' formula. \cite{lewis2014bayesian} discuss an approach for building posterior distributions from robust $M$-estimators using constrained Markov Chain Monte Carlo (MCMC) methods. Recent approaches based on tilted likelihoods can be found in \cite{grunwald2014inconsistency}, \cite{watson2016approximate}, \cite{miller2015robust}. Finally, approaches based on model embedding through heavy-tailed distributions are discussed by \cite{andrade2006bayesian}.

The aforementioned approaches may present some drawbacks. The empirical likelihood is not computable for small sample sizes and posterior distributions based on the quasi-likelihood can be easily obtained only for scalar parameters. The restricted likelihood approach of \cite{lewis2014bayesian}, as well as all the approaches based on estimating equations can be computationally cumbersome with some robust $M$-estimating functions (such as, for instance, those used in linear mixed effects models). The tilted and the weighted likelihood approaches refer to concepts of robustness that are not directly related to the one considered in this paper, which is based on the influence function \citep{hampel1986,huber2009robust}. Finally, the idea of embedding the model in a larger structure has the cost of requiring the elicitation of a prior distribution for the extra parameters introduced. Moreover, the statistical procedures derived under an embedded model are not necessarily robust in a broad sense, since the larger model may still be too restricted. 

Here we focus on the robustness approach based on the influence function  and on the derivation of robust posterior distributions from robust \emph{M}-estimating functions, i.e. estimating equations with bounded influence function (see, e.g., \citealp{huber2009robust}, Chap. 3). In particular, we propose an approach based on Approximate Bayesian Computation (ABC) methods (see, e.g.,\ \citealp{beaumont2002approximate}) using robust \emph{M}-estimating functions as summary statistics. The idea extends results of \cite{ruli2016approximate} on composite score functions to Bayesian robustness. The method is easy to implement and computationally efficient, even when the \emph{M}-estimating functions are potentially cumbersome to evaluate. Theoretical properties, implementation details and simulation results are discussed.
 
The rest of the paper is structured as follows. Section 2 sets the background. Section 3 describes the proposed method and its properties. Section 4 investigates the properties of the proposed method in the context of linear mixed models through simulations and an application to a clinical study. Concluding remarks are given in Section 5.


\section{Background on robust $M$-estimating functions}
\label{sec:background}

Let $y = (y_1,\ldots , y_n)$ be a random sample of size $n$, having independent and identically distributed components, according to a distribution function $F_\theta =F(y;\theta)$, with $\theta \in \Theta \subseteq \Real^d$, $d \geq 1$ and $y\in\mathcal{Y}$. Let $L(\theta)$ be the likelihood function based on model $F_\theta$.

Furthermore, let 
\begin{eqnarray}
\Psi_\theta = \Psi(y;\theta)= \sum_{i=1}^n \psi(y_i;\theta)-c(\theta)\,,\label{eq:Psi}
\end{eqnarray} 
 be an unbiased estimating function for $\theta$, i.e.\ such that $E_\theta (\Psi(Y; \theta)) = 0$ for every $\theta$. In \eqref{eq:Psi}, $\psi(\cdot)$ is a known function, $E_\theta(\cdot)$ is the expectation with respect to $F_\theta$ and the function $c(\cdot)$ is a consistency correction which ensures unbiasedness of the estimating function.

A general \emph{M}-estimator (see, e.g., \citealp{hampel1986}, \citealp{huber2009robust}) is defined as the root $\tilde\theta$ of the estimating equation
$\Psi_\theta = 0$. The class of $M$-estimators is wide and includes a variety of well-known estimators. For example, it includes the maximum likelihood estimator (MLE), the maximum composite likelihood estimator \citep[see, e.g.,][and references therein]{ruli2016approximate} and the scoring rule estimator (see e.g.\ \citealp{dawid2016minimum}, and references therein). Under broad regularity conditions, assumed throughout this paper, an $M$-estimator is consistent and approximately normal with mean $\theta$ and variance
\begin{eqnarray}
K (\theta) = H(\theta)^{-1} J(\theta) H(\theta)^{-\T}
\ , 
\label{var}
\end{eqnarray}
where $H(\theta) = -E_\theta (\partial \Psi_\theta/\partial \theta^{\T})$ and $J(\theta) = E_\theta(\Psi_\theta \Psi_\theta^{\T} )$ are the sensitivity and the variability matrices, respectively. The matrix $G(\theta)=K(\theta)^{-1}$ is known as the Godambe information and the form of $K(\theta)$ is due to the failure of the information identity since, in general, $H(\theta) \neq J(\theta)$. 

The influence function (\emph{IF}) of the estimator $\tilde\theta$ is $\mathrm{\emph{IF}}(x;\tilde\theta,F_\theta) \propto \psi (x;\theta)$ and it measures the effect on the estimator $\tilde\theta$ of an infinitesimal contamination at the point $x$, standardised by the mass of the contamination. A desirable robustness property for $\tilde\theta$ is that its \emph{IF} is bounded (B-robustness), i.e.\ that $\psi(x;\theta)$ is bounded. Note that the \emph{IF} of the MLE is proportional to the score function; therefore, in general, the MLE has unbounded \emph{IF}, i.e.\ it is not B-robust. 
\section{Robust ABC inference} 
\label{sec:bayesian}

One possibility to perform robust Bayesian inference is to resort to  a pseudo-posterior distribution of the form \begin{equation}
\pi_R(\theta|y) \propto \pi(\theta) \, L_R(\theta)
\ ,
\label{postR}
\end{equation} 
where $\pi(\theta)$ is a prior distribution for $\theta$ and $L_R(\theta)$ is a pseudo-likelihood based on a robust $\Psi_\theta$, such as the quasi- or the empirical likelihood. This approach  has two main drawbacks: the empirical likelihood is not computable for very small sample sizes and for moderate sample sizes the corresponding posterior appears to have always heavy tails \citep[see, e.g.,][]{greco2008robust}; moreover, the posterior distribution based on the quasi-likelihood can be easily obtained only for scalar parameters.  A further limitation of this approach is related to computational cost, in the sense that it requires repeated evaluations of the consistency correction $c(\theta)$ in \eqref{eq:Psi}, which in practice is often cumbersome.

We propose an alternative method for computing posterior distributions based on robust $M$-estimating functions, extending the idea in \cite{ruli2016approximate}. The method resorts to the ABC machinery \citep[see, e.g.,][]{beaumont2002approximate} in which a standardised version of $\Psi_\theta$, evaluated at a fixed value of $\theta$, is used as a summary statistic. In \cite{ruli2016approximate} the composite score function is used as a model-based data reduction procedure for ABC in complex models. Here we generalise the approach to general unbiased robust estimating functions. \textcolor{red}{In particular, let $\tilde\theta=\tilde\theta(y)$ be the $M$-estimate of $\theta$ based on the observed sample $y$. Furthermore, let $B_R(\theta)$ be such that $J(\theta)=B_R(\theta) B_R(\theta)^{\T}$.} The summary statistic in ABC is then the rescaled \emph{M}-estimating function

\begin{equation}
\eta_R(y^*;\theta) = B_R(\theta)^{-1} \Psi(y^*;\theta)
 \ ,
\label{etar}
\end{equation}
evaluated at $\tilde\theta$, where $y^*$ is a simulated sample. In the sequel we use the shorthand notation $\tilde{\eta}_R(y^*) = \eta_R(y^*;\tilde{\theta})$.

To generate posterior samples we propose to use the ABC-R algorithm with an MCMC kernel (Algorithm~1), which is similar to Algorithm 2 of \cite{fearnhead2012constructing}; see also \cite{marjoram2003markov}. More specifically, the ABC-R algorithm (Algorithm 1) involves a kernel density $K_h(\cdot)$, which is governed by the bandwidth $h>0$ and a proposal density $q(\cdot\vert\cdot)$; see the Appendix for the implementation details.

\vspace{0.2cm}
\IncMargin{1em}
\begin{algorithm}
  \SetAlgoLined
  \KwResult{A Markov dependent sample $(\theta^{(1)},\ldots,\theta^{(m)})$ from $\pi_R^{ABC}(\theta|\tilde\theta)$}
  \KwData{a starting value $\theta^{(0)}$, a proposal density $q(\cdot|\cdot)$}
  \For {$i = 1 \to m$}{
 draw $\theta^*\,\sim\,q(\cdot|\theta^{(i-1)})$\\
     draw $y^*\,\sim\, F_{\theta^*}$\\
     draw $u\sim U(0,1)$\\
  \eIf{u $\leq\frac{K_h(\tilde\eta_R(y^{*}))}{K_h(\tilde\eta_R(y^{(i-1)}))}\frac{\pi(\theta^{*})q(\theta^{(i-1)}|\theta^{*})}{\pi(\theta^{(i-1)})q(\theta^{*}|\theta^{(i-1)})}$}{
  set $(\theta^{(i)},\tilde\eta_R^{{(i)}})=(\theta^{*},\tilde{\eta}_R(y^{*}))$
}{
 set $(\theta^{(i)},\tilde\eta_R^{(i)})\, =\, (\theta^{(i-1)},\tilde\eta_R(y^{(i-1)}))$
}
}
  \caption[]{\label{alg:abc-mcmc} ABC-R algorithm with MCMC.}
\end{algorithm}
\DecMargin{1.em}
\vspace{0.2cm}

The proposed method gives Markov-dependent samples from the ABC-R posterior
\begin{eqnarray}
\pi_R^{ABC} (\theta|\tilde\theta) = \frac{\int_{\mathcal{Y}^*} \pi(\theta) \, f(y^*;\theta)K_h(\tilde\eta_R(y^*))\,dy^*}{\int_{\mathcal{Y}^*\times\Theta} \pi(\theta) \, f(y^*;\theta)K_h(\tilde\eta_R(y^*))\,dy^*d\theta}
\ .
\label{postABCR}
\end{eqnarray}
While Algorithm~\ref{alg:abc-mcmc} or the use of a kernel in \eqref{postABCR} are not new ideas in the ABC literature, the novelty here is to incorporate in such machinery the robust summary statistic $\tilde{\eta}_R(y^*)$ in order to obtain a simulated sample from a robust posterior distribution. Using similar arguments to \citet[][]{soubeyrand2013}, it can be shown that, for $h \to 0$, $\pi_R^{ABC}(\theta|\tilde\theta)$ converges to $\pi(\theta|\tilde{\theta})$ pointwise \citep[see also][]{blum2010approximate}, in the sense that $\pi_R^{ABC}(\theta|\tilde\theta)$ and  $\pi(\theta|\tilde{\theta})$ are equivalent for sufficiently small $h$. Since in general (\ref{etar}) does not give a sufficient summary statistic, then $\pi(\theta|\tilde\theta)$  differs from $\pi(\theta|y)$ and information is lost by using (\ref{etar}) instead of $y$. However this difference pays off in terms of robustness in inference about $\theta$.

Posteriors conditional on partial information have been extensively discussed in the literature. \cite{soubeyrand2015weak} study the properties of the ABC posterior when the summary statistic is the MLE or the pseudo-MLE derived from a simplified parametric model. An alternative version of the ABC-R algorithm could be based directly on $\tilde{\theta}$, used as the summary statistic and a, possibly rescaled, distance among the observed and the simulated value of the statistic. Apparently, these two versions of ABC, namely the one based on $\tilde{\theta}$ and that based on (\ref{etar}) seem to be treated in the literature as two separate approaches (see, e.g., \citealp{drovandi2015}). However, both alternatives use essentially the same information, i.e. $\tilde{\theta}$, but through different distance metrics. In addition, for small tolerance levels, these two distances converge to zero, and both methods give a posterior distribution conditional on the same statistic $\tilde{\theta}$.
Indeed, let $\tilde{\theta}$ be the summary statistic of the ABC posterior and let the corresponding tolerance threshold $\epsilon$ be sufficiently small and consider the random draw $\theta^{*}$ and its corresponding simulated summary statistics $\tilde\theta^{*}$ taken with the ABC algorithm. Then, by construction $\tilde\theta^{*}$ will be close to $\tilde{\theta}$.
This implies that also $\tilde\eta_R(y^*) = \eta_R(y^*;\tilde{\theta})$ will be close to $\eta_R(y^*;\tilde{\theta}^*)=0$, and hence $\theta^{*}$ is also a sample from the ABC-R posterior which uses the summary statistic $\tilde\eta_R$. 

Nevertheless, the use of $\tilde{\theta}$ as summary statistic requires the solution of $\Psi_\theta=0$ at each iteration of the algorithm, which could be computationally cumbersome. On the contrary, the proposed approach, besides sharing the same invariance properties stated by \cite{ruli2016approximate}, i.e. invariance with respect to both monotonic transformation of the data and with respect to reparameterisations, has the advantage of avoiding computational problems related to the repeated evaluation of $\Psi_\theta$ as shown by the following lemma. 

\setcounter{theorem}{0}

\begin{lemma}
The ABC-R algorithm does not require repeated evaluations of the consistency correction  $c(\theta)$ involved in $\Psi_\theta$, as given by (\ref{eq:Psi}).
\end{lemma}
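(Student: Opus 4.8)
The plan is to show that the only place where the ABC-R algorithm evaluates the estimating function is through the rescaled summary statistic $\tilde\eta_R(y^*) = B_R(\tilde\theta)^{-1}\Psi(y^*;\tilde\theta)$, and that in every such evaluation the argument of $\Psi$ is the \emph{fixed} observed estimate $\tilde\theta$ rather than a freshly proposed parameter value. The key observation is structural: inspecting Algorithm~\ref{alg:abc-mcmc}, the proposed parameter $\theta^*$ enters only through the prior ratio $\pi(\theta^*)/\pi(\theta^{(i-1)})$, the proposal ratio, and the generation of the pseudo-data $y^*\sim F_{\theta^*}$. The estimating function itself is never called at $\theta^*$; it is called only at $\tilde\theta$, via $\tilde\eta_R(y^*)$ and $\tilde\eta_R(y^{(i-1)})$ in the acceptance kernel ratio.

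First I would write out explicitly the decomposition $\Psi(y^*;\theta) = b(\theta)^{\T} a(y^*,\theta) - c(\theta)$ from \eqref{eq:typicalPsi}, and substitute the fixed value $\theta = \tilde\theta$, giving $\Psi(y^*;\tilde\theta) = b(\tilde\theta)^{\T} a(y^*,\tilde\theta) - c(\tilde\theta)$. The point to emphasise is that, because $\tilde\theta$ is computed once from the observed data $y$ before the loop begins, both the consistency correction $c(\tilde\theta)$ and the matrices $b(\tilde\theta)$, $B_R(\tilde\theta)$ are constants throughout the entire run of the algorithm. Hence $c(\tilde\theta)$ is evaluated a single time, not repeatedly inside the iterations. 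I would then note that the only quantity genuinely recomputed at each iteration is $a(y^*,\tilde\theta)$, which is a known function of the simulated data and the fixed estimate and does not involve the typically intractable expectation defining $c(\cdot)$.

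Second, I would confirm that this conclusion is unaffected by the contrast with the alternative ABC scheme based on $\tilde\theta^*$ as summary statistic. That alternative, as already noted in the text preceding the lemma, would require solving $\Psi_\theta = 0$ at each iteration, which in turn forces repeated evaluation of $c(\theta)$ at varying $\theta$; the proposed method deliberately avoids this by conditioning on $\tilde\eta_R$ evaluated at the fixed $\tilde\theta$. I would close by observing that this is precisely the computational advantage the lemma asserts.

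I do not anticipate a serious mathematical obstacle here, since the statement is essentially an observation about the algorithm's structure rather than a deep analytic fact; the main care required is simply to verify rigorously that no step of Algorithm~\ref{alg:abc-mcmc} implicitly re-evaluates $\Psi$ at the proposed $\theta^*$. The one subtlety worth stating carefully is why the correction must be evaluated at $\tilde\theta$ and not at the data-generating parameter: the summary statistic is deliberately anchored at the observed estimate so that the consistency correction becomes a precomputed constant, and it is exactly this anchoring that purchases the computational saving.
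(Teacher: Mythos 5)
Your proposal is correct and rests on exactly the same observation as the paper's proof: the estimating function is only ever evaluated at the fixed observed estimate $\tilde\theta$, so $c(\tilde\theta)$ (and likewise $b(\tilde\theta)$, $B_R(\tilde\theta)$) is a constant computed once before the MCMC loop, and only $a(y^*,\tilde\theta)$ changes across iterations. The paper phrases this slightly more sharply by writing the summary statistic as the difference $\Psi(y;\tilde\theta)-\Psi(y^*;\tilde\theta)$ (legitimate since $\Psi(y;\tilde\theta)=0$), under which the correction $c(\tilde\theta)$ cancels outright and reduces to $b(\tilde\theta)^{\T}\{a(y,\tilde\theta)-a(y^*,\tilde\theta)\}$; your ``precomputed constant'' framing reaches the same conclusion without needing that cancellation.
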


\begin{proof}
Let $\tilde{\theta}$ be the solution of $\Psi_\theta = 0$, with $\Psi_\theta$ of the form (1). Then, for a given simulated $y^*$ from $F_{\theta^*}$, we have
\begin{eqnarray*}
\tilde\eta_R(y^*) =  B_R(\tilde{\theta})^{-1}(\Psi(y^*;\tilde\theta) - \Psi(y;\tilde\theta))  = \sum_{i=1}^n(\psi(y_i^*, \tilde\theta) - \psi(y_i, \tilde\theta))\,.
\end{eqnarray*}
This implies that $c(\theta)$ is computed only once, at $\tilde{\theta}$.
\end{proof}

\setcounter{theorem}{0}

Theorem~\ref{th:theo1} below shows that the proposed method gives a robust approximate posterior distribution with the correct curvature, even though $\Psi_\theta$, unlike the full score function, does not satisfy the information identity. Here, correct curvature means that asymptotically the robust posterior distribution and its normal approximation have the same covariance matrix, which is the inverse of the Godambe information, i.e. $K(\theta)$.

\begin{theorem}\label{th:theo1}
The ABC-R algorithm with rescaled \emph{M}-estimating function $\tilde\eta_R(y)$ as summary statistic, as $h \to 0$, leads to an approximate posterior distribution with the correct curvature and is also invariant to reparameterisations.
\end{theorem}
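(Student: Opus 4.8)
The plan is to analyse the limiting form of the ABC-R posterior \eqref{postABCR} as $h\to 0$, read off its curvature at the mode, and then verify parametrisation invariance directly from the structure of the summary statistic \eqref{etar}. First I would note that, since $K_h$ is a density kernel, the inner integral in \eqref{postABCR} equals $E_\theta[K_h(\tilde\eta_R(Y^*))]$, which as $h\to 0$ converges to the density of $\tilde\eta_R(Y^*)$ evaluated at the origin, under $Y^*\sim F_\theta$. Hence, up to a normalising constant,
\[
\pi_{ABC}^R(\theta|\tilde\theta) \;\propto\; \pi(\theta)\, f_{\tilde\eta_R}(0\,|\,\theta)\,,
\]
and because $\tilde\eta_R = B_R(\tilde\theta)^{-1}\Psi(\cdot;\tilde\theta)$ is a linear bijection of $\Psi(\cdot;\tilde\theta)$ with Jacobian $|J(\tilde\theta)|^{-1/2}$ not depending on $\theta$, it suffices to study the density of $\Psi(Y^*;\tilde\theta)$ at the origin as a function of $\theta$.

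Next I would obtain the local Gaussian form of this density. Expanding about $\theta$ and using unbiasedness, $E_\theta[\Psi(Y^*;\theta)]=0$ together with $H(\theta) = -E_\theta(\partial \Psi_\theta/\partial\theta^{\T})$ give $E_\theta[\Psi(Y^*;\tilde\theta)] \approx -H(\theta)(\tilde\theta-\theta)$, while to leading order $\mathrm{Var}_\theta[\Psi(Y^*;\tilde\theta)] \approx J(\theta)$. A normal approximation then yields
\[
f_{\tilde\eta_R}(0\,|\,\theta) \;\propto\; \exp\!\left(-\tfrac12 (\tilde\theta-\theta)^{\T} H(\theta)^{\T} J(\theta)^{-1} H(\theta)(\tilde\theta-\theta)\right)\,.
\]
Evaluating the Hessian of the exponent at $\theta=\tilde\theta$ identifies the curvature of $\pi_{ABC}^R(\theta|\tilde\theta)$ with $H(\tilde\theta)^{\T} J(\tilde\theta)^{-1} H(\tilde\theta) = G(\tilde\theta) = K(\tilde\theta)^{-1}$, the Godambe information. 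This is the correct curvature: even though $H(\theta)\neq J(\theta)$ and the information identity fails, the rescaling by $B_R$ (with $J=B_R B_R^{\T}$) inserts exactly the factor $J^{-1}$ between the two sensitivity matrices, reproducing the asymptotic precision of $\tilde\theta$ given in \eqref{var}.

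For the invariance claim I would consider a smooth bijection $\theta=g(\phi)$. The estimating function in the new parametrisation is the substitution $\Psi(y;g(\phi))$, its root is $\tilde\phi=g^{-1}(\tilde\theta)$ by equivariance of $M$-estimators, and the variability matrix transforms as $J_\phi(\tilde\phi)=J_\theta(\tilde\theta)$. Consequently the quadratic form entering a radial kernel, $\Psi(y^*;\tilde\theta)^{\T} J(\tilde\theta)^{-1}\Psi(y^*;\tilde\theta)$, is left unchanged, so the ABC-R acceptance rule depends on the parameter only through quantities that are invariant up to the usual Jacobian of $\pi$ and $q$; the resulting posterior for $\phi$ is therefore the exact push-forward of that for $\theta$, which establishes invariance.

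The main obstacle will be making the local Gaussian approximation rigorous: the density $f_{\tilde\eta_R}(0\,|\,\theta)$ must be controlled uniformly as $\theta$ ranges over a shrinking neighbourhood of $\tilde\theta$, so that both the $\theta$-dependent mean $-H(\theta)(\tilde\theta-\theta)$ and the variance $J(\theta)$ are simultaneously handled to the order needed to isolate the quadratic term. This requires an Edgeworth-type control of the remainder in the expansion of $\Psi(Y^*;\tilde\theta)$, together with the interchange of the $h\to 0$ limit and the normalising integration over $\theta$, which I would justify along the lines of \citet[Appendix A]{soubeyrand2013}.
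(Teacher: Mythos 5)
Your argument is correct, but it is not the route the paper takes: the paper's entire proof is a one-line deferral to Theorem 3.2 of Ruli \emph{et al.} (2016), "substituting the composite estimating equation with the more general $M$-estimating function $\Psi_\theta$," whereas you reconstruct the result from first principles. Your derivation --- pass to the $h\to 0$ limit so that the ABC-R posterior becomes $\pi(\theta)\,f_{\tilde\eta_R}(0\,|\,\theta)$ up to a constant, then read off the exponent $(\tilde\theta-\theta)^{\T} H(\theta)^{\T} J(\theta)^{-1} H(\theta)(\tilde\theta-\theta)$ from a local normal approximation of $\Psi(Y^*;\tilde\theta)$ with mean $-H(\theta)(\tilde\theta-\theta)$ and variance $J(\theta)$ --- is exactly the mechanism that produces the Godambe information $G(\tilde\theta)=K(\tilde\theta)^{-1}$, and it has the merit of making visible \emph{why} no post-hoc calibration (unlike for \eqref{postR}) is needed. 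One point of emphasis is slightly misplaced: in the $h\to 0$ limit the factor $J^{-1}$ in the curvature comes from conditioning on the event $\{\Psi(Y^*;\tilde\theta)\approx 0\}$, i.e.\ from the covariance of $\Psi$ entering the normal density at the origin, not from the rescaling by $B_R(\tilde\theta)$, which (being evaluated at the fixed observed $\tilde\theta$) only contributes a $\theta$-free Jacobian constant; $B_R$ matters for finite $h$, where it puts the components of the summary on a common scale so that a spherical kernel is sensible, and for the invariance argument. Your invariance step is sound (and does not actually require a radial kernel, since the summary statistic itself is unchanged under $\theta=g(\phi)$ because $\tilde\phi=g^{-1}(\tilde\theta)$ and $J_\phi(\tilde\phi)=J_\theta(\tilde\theta)$), and you correctly flag the genuine technical burden --- uniform control of the normal approximation over a shrinking neighbourhood of $\tilde\theta$ and the interchange of the $h\to 0$ limit with the normalising integral --- which the paper silently inherits from the cited reference and from \citet[Appendix A]{soubeyrand2013}.
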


\begin{proof}
The proof follows from Theorem 3.2 of \cite{ruli2016approximate}, by substituting the composite estimating equation with the more general \emph{M}-estimating function $\Psi_\theta$. 
\end{proof}

The ABC-R algorithm delivers thus a robust approximate posterior distribution which does not need calibration. On the contrary, for (\ref{postR}) a calibration is typically required.

Theorem~\ref{th:theo2} below shows that the proposed ABC posterior distribution is asymptotically normal. 

\begin{theorem}\label{th:theo2}
Assume the regularity assumptions of \cite{soubeyrand2015weak} and the usual regularity condition on \emph{M}-estimators \citep[Chap. 4]{huber2009robust} are satisfied. Then, for $n \to \infty$ and $h \to 0$, the posterior $\pi_{R}^{ABC}(\theta|\tilde{\theta})$ is asymptotically equivalent to the density of the normal distribution with mean vector $\tilde\theta$ and covariance matrix $K(\tilde\theta)$:
\begin{eqnarray}
\pi_{R}^{ABC}(\theta|\tilde\theta) \, \dot{\sim} \, N_d (\tilde\theta,K(\tilde\theta)) 
\ .
\end{eqnarray}
\end{theorem}

\begin{proof}
The proof follows from Lemma 2 and Theorem 1 in  \cite{soubeyrand2015weak} and from the asymptotic relation between the Wald-type statistic and the score-type statistic, i.e.
\[
\eta_R(y;\theta)^{\T} \, \eta_R(y;\theta) = \Psi_\theta^{\T} J(\theta)^{-1} \Psi_\theta = 
(\tilde\theta - \theta)^{\T} K(\theta)^{-1} (\tilde\theta - \theta) + o_p(1)
\ .
\]

\end{proof}


If $\psi(y;\theta)$ is bounded in $y$, i.e.\ if the estimator $\tilde\theta$ is B-robust, then the ABC-R posterior is resistant with respect to slight violations of model assumptions. 
More precisely, the following theorem shows that the ABC-R posterior inherits the robustness properties of the estimating equation.

\begin{theorem}
If $\psi(y;\theta)$ is bounded in $y$, i.e.\ if the estimator $\tilde\theta$ is B-robust, then asymptotically the posterior mode, as well as other posterior summaries of $\pi_{R}^{ABC}(\theta|\tilde{\theta})$ have bounded IF. 
\end{theorem}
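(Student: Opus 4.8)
The plan is to combine Theorem~\ref{th:theo2} with the closed-form expression for the influence function of the $M$-estimator $\tilde\theta$ recalled in Section~\ref{ssec:robustm}. The observation that drives everything is that, asymptotically, every functional named in the statement is a smooth function of $\tilde\theta$ alone, so that boundedness of the relevant influence functions is inherited from boundedness of $IF(\cdot;\tilde\theta,F_\theta)$.

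First I would invoke Theorem~\ref{th:theo2}: for large $n$ and $h\to 0$ the posterior $\pi_{R}^{ABC}(\theta|\tilde\theta)$ is equivalent to the density of $N_d(\tilde\theta,K(\tilde\theta))$, the prior contribution being asymptotically negligible. For this limiting normal law the mode and the first moment both coincide with the location parameter $\tilde\theta$, the second central moment equals $K(\tilde\theta)$, and more generally each moment of $N_d(\mu,\Sigma)$ is a polynomial in the entries of $\mu=\tilde\theta$ and $\Sigma=K(\tilde\theta)$. Since $\theta\mapsto K(\theta)$ is a fixed smooth map, every such posterior functional is, asymptotically, of the form $\phi(\tilde\theta)$ for a smooth $\phi$.

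Next I would recall that $IF(x;\tilde\theta,F_\theta)=H(\theta)^{-1}\psi(x;\theta)$, so that B-robustness, i.e.\ boundedness of $\psi(\cdot;\theta)$ in its first argument, is precisely boundedness of $IF(\cdot;\tilde\theta,F_\theta)$ in the contamination point. The conclusion for the mode and the mean is then immediate, as these functionals asymptotically equal $\tilde\theta$ and therefore share its influence function. For the posterior moments of order $\ge 2$ I would apply the chain rule for influence functions to $\phi(\tilde\theta)$, obtaining
\[
IF(x;\phi,F_\theta)=\dot\phi(\theta)\,IF(x;\tilde\theta,F_\theta),
\]
where the Jacobian $\dot\phi(\theta)$ does not depend on the contamination point $x$. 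Hence the right-hand side is a fixed matrix multiplying a function that is bounded in $x$, and so each moment functional has bounded influence function as well.

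The hard part will be making rigorous the interchange between the two limiting operations ($n\to\infty$ and $h\to 0$) that produce the normal approximation and the Gâteaux differentiation that defines the influence function: one must verify that the influence function of a functional of the \emph{actual} ABC-R posterior agrees asymptotically with the influence function computed from its limiting normal law. Granting the uniformity implicit in the regularity conditions of \cite{soubeyrand2015weak}, together with the smoothness of $\theta\mapsto K(\theta)$ needed to control the moments of order $\ge 2$, this interchange is legitimate and the argument closes.
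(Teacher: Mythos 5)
Your proposal is correct and follows essentially the same route as the paper: both arguments reduce the claim to the B-robustness of $\tilde\theta$ via the asymptotic normality of $\pi_{R}^{ABC}(\theta|\tilde\theta)$ established in Theorem~\ref{th:theo2}, so that the mode inherits the bounded influence function $H(\theta)^{-1}\psi(x;\theta)$ directly. The only difference is that where the paper cites \cite{greco2008robust} for the equivalence between bounded-$IF$ moments and a bounded-$IF$ mode, you supply the underlying argument explicitly (moments of the limiting normal are smooth functions of $\tilde\theta$, so the chain rule for influence functions applies), and you additionally flag the interchange of the $n\to\infty$, $h\to 0$ limits with the G\^ateaux differentiation, a point the paper passes over silently.
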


\begin{proof}
From Theorem~\ref{th:theo2}, the asymptotic posterior mode of $\pi_{R}^{ABC}(\theta|\tilde\theta)$ is $\tilde\theta$, which is $B$-robust. Moreover, following results in \cite{greco2008robust}, it can be shown that asymptotic posterior summaries have bounded \emph{IF} if and only if the posterior mode has bounded \emph{IF}.
\end{proof}

\vspace{0.3cm}

\noindent {\bf Example.} We consider an illustrative example in which we compare numerically the ABC-R posterior, with the classical posterior based on the assumed model and the pseudo-posterior (\ref{postR}) based on the empirical likelihood \citep{lazar03,greco2008robust}. Scenarios with data simulated either from the assumed model or from a slightly misspecified model are considered.

Let $F_\theta$ be a location-scale distribution with location $\mu$ and scale $\sigma>0$, and let $\theta = (\mu, \sigma)$. The Huber's estimating function is a standard choice for robust estimation of location and scale parameters. The $M$-estimating function is
$\Psi_\theta = (\Psi_{\mu}, \Psi_{\sigma})$, with
\begin{eqnarray}
\Psi_\mu = \sum_{i=1}^n \psi_{c_1}(z_i)\,\quad\text{and}\quad
\Psi_\sigma = \sum_{i=1}^{n}\left(\psi_{c_2}(z_i)^2 - k(c_2)\right)
\ ,
\label{huber}
\end{eqnarray}
where $z_i = (y_i-\mu)/\sigma$, $i=1,\ldots,n$, $\psi_{c}(z) = \max[-c,\min(c,z)]$ is the Huber $\psi$-function, $c>0$ is a scalar tuning constant which controls the desired degree of robustness of $\tilde{\theta}$, and $k(\cdot)$ is a consistency correction term.  Let $F_\theta$ be the normal distribution $N(\mu, \sigma^2)$ and assume $\mu$ and $\sigma$ a priori independent with $\mu\sim N(0,10^2)$ and $\sigma\sim \text{halfCauchy}(5)$, where $\text{halfCauchy}(a)$ is the half Cauchy distribution with scale parameter equal to $a$. We consider random samples of sizes $n=\{15, 30\}$ drawn from either the normal distribution with $\theta=(0,1)$ and from a contaminated model $(1-\delta)N(0,1) + \delta N(0, \sigma_1^2)$, with $\sigma_1^2>0$. We set the contamination level equal to 10\%, i.e. $\delta=0.1$, and $\sigma_1^2=10$. Moreover, we fix $c_1=1.345$ and $c_2=2.07$, which imply that $\tilde{\mu}$ and $\tilde{\sigma}$ are, respectively, 5\% and 10\% less efficient than the corresponding MLE under the assumed model (see \citealp{huber2009robust}, Chap. 6).

The genuine, e.g.\ the posterior based on the likelihood function of the normal model, and the pseudo-posterior (\ref{postR}) based on the empirical likelihood (EL) are computed by numerical integration. The ABC-R posterior is obtained using Algorithm~1. From the posterior distributions illustrated in Figure~\ref{fig:fig1} we note that, when the data come from the central model (panels (a)-(b)), i.e. for $\delta=0$, all the posteriors are in reasonable agreement, even if the EL posterior behaves slightly worse, especially the marginal posterior of $\sigma$ with $n=15$. When the data are contaminated (panels (c)-(d)), the genuine posterior is less trustworthy as the bulk of the posterior drifts away from the true parameter value (vertical and horizontal straight lines). This is not the case however for the ABC-R posterior which remains centred around the true parameter value. 
We note that in the contaminated case, the ABC-R posterior is the one with smaller variability. This is due to the fact that the ABC-R posterior is not affected by the very outlying observations coming from the contamination component.

\begin{figure}[ht!]
\centering
	\includegraphics[height=0.49\textwidth, width=0.35\textwidth,angle=-90]{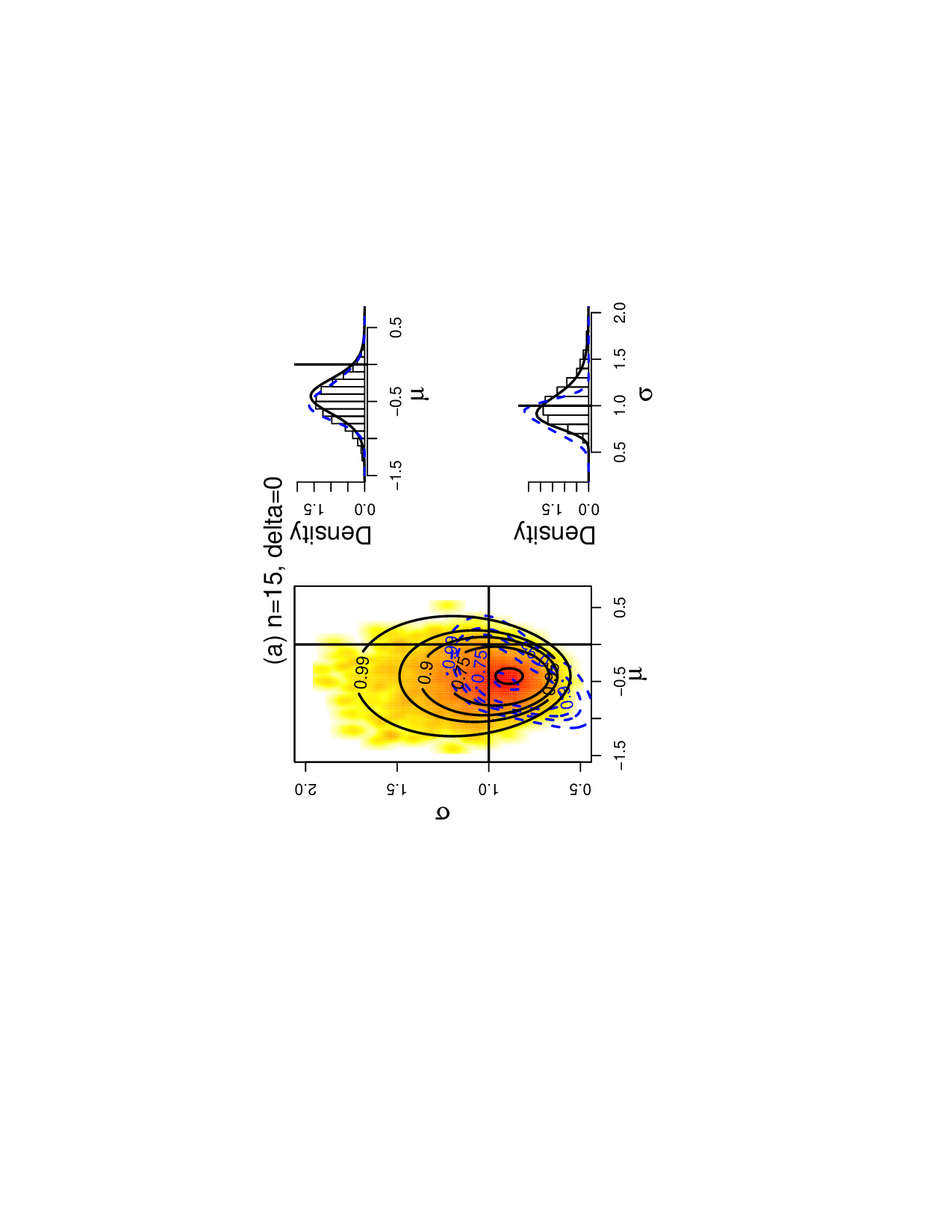}
	\includegraphics[height=0.49\textwidth, width=0.35\textwidth, angle=-90]{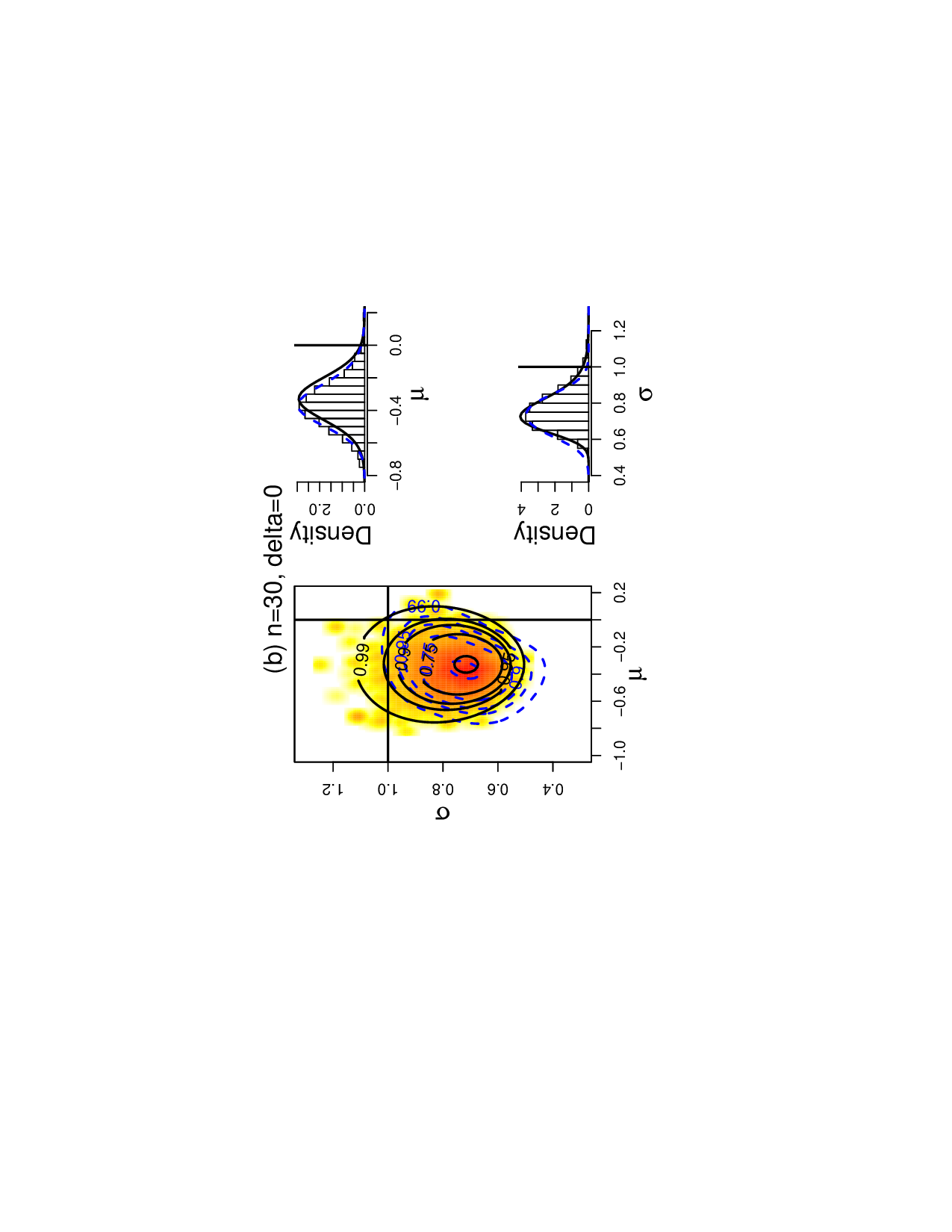}\\
	\includegraphics[height=0.49\textwidth, width=0.35\textwidth, angle=-90]{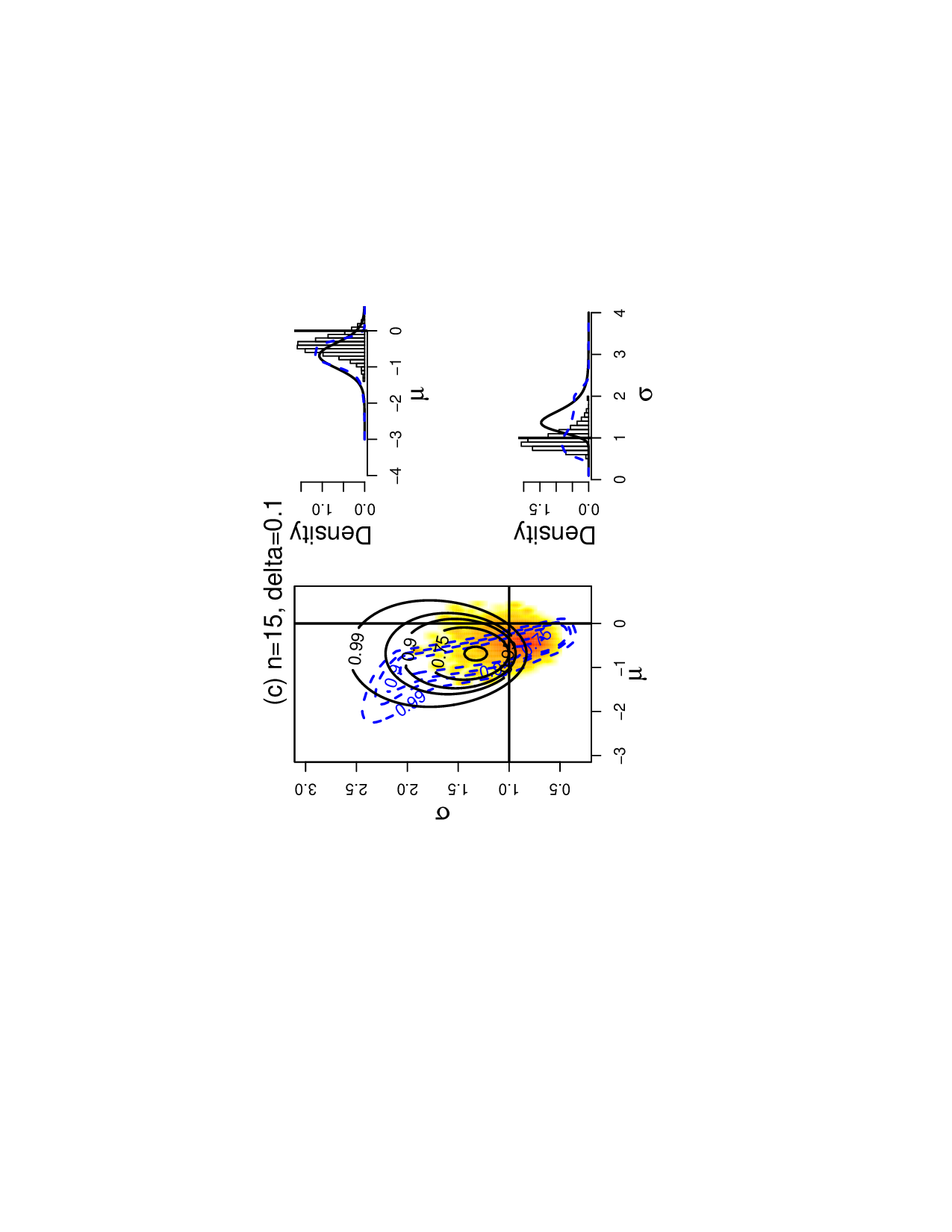}
	\includegraphics[height=0.49\textwidth, width=0.35\textwidth, angle=-90]{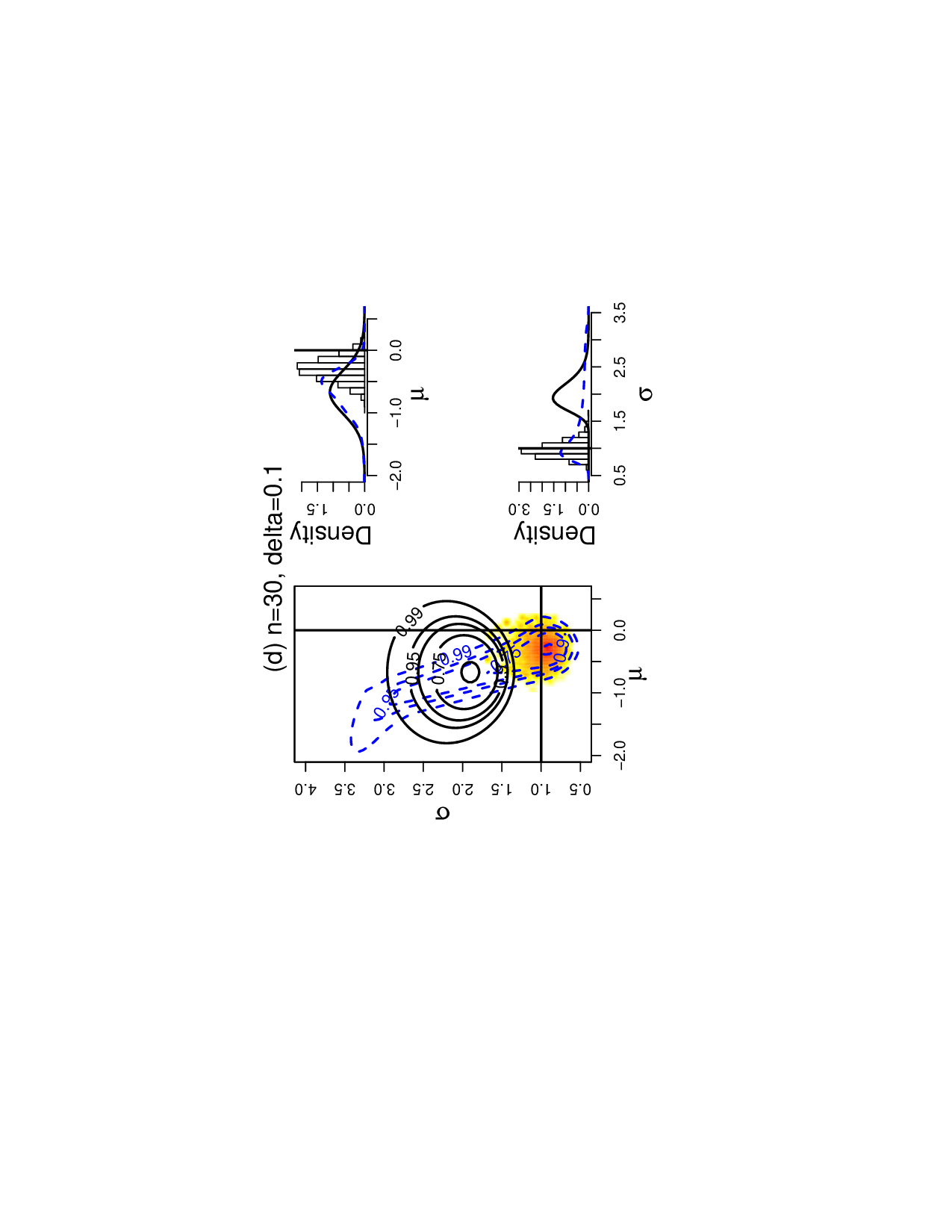}\\
	\vspace{-0.2cm}
	\caption{First row: genuine (black solid), EL (blue dashed) and ABC-R posteriors (shaded image and histogram) for the normal model, when the data come from the central model $N(0,1)$ with (a) $n=15$ and (b) $n=30$. Second row: genuine, EL and ABC-R posteriors for the normal model, when the data come from the contaminated model with $\delta=0.1$, (c) $n=15$ and (d) and $n=30$.}
	\label{fig:fig1}
	\vspace{-0.5cm}
\end{figure}

To highlight the robustness properties of the ABC-R posterior, we consider a sensitivity analysis. A sample $y$ of size $n=31$ is taken from the central model and the aforementioned posteriors are computed from the contaminated data $y^w$ given by the original data with the median observation $y_{(n+1)/2}$ replaced by $y_{(n+1)/2} + w$; $w$ is a contamination scalar with possible values $\{-15,-14,\ldots,15\}.$ The results of the sensitivity analysis, illustrated by means of violin plots in Figure~\ref{fig:fig2}, highlight that the posterior median of the genuine posterior (panel (c))  is substantially driven by $w$. On the other hand, ABC-R and EL posteriors are robust. For all posteriors, the behaviour of the posterior median reflects the behaviour of the \emph{IF} of the posterior mode. Furthermore, the variability of all posteriors is comparable for values of $w$ close to 0. More generally, these plots confirm that the genuine and EL posteriors under contamination are much more dispersed than the ABC-R posterior. 

\begin{figure}[ht!]
\centering
	\includegraphics[width=0.9\textwidth,height=1\textwidth, angle=-90]{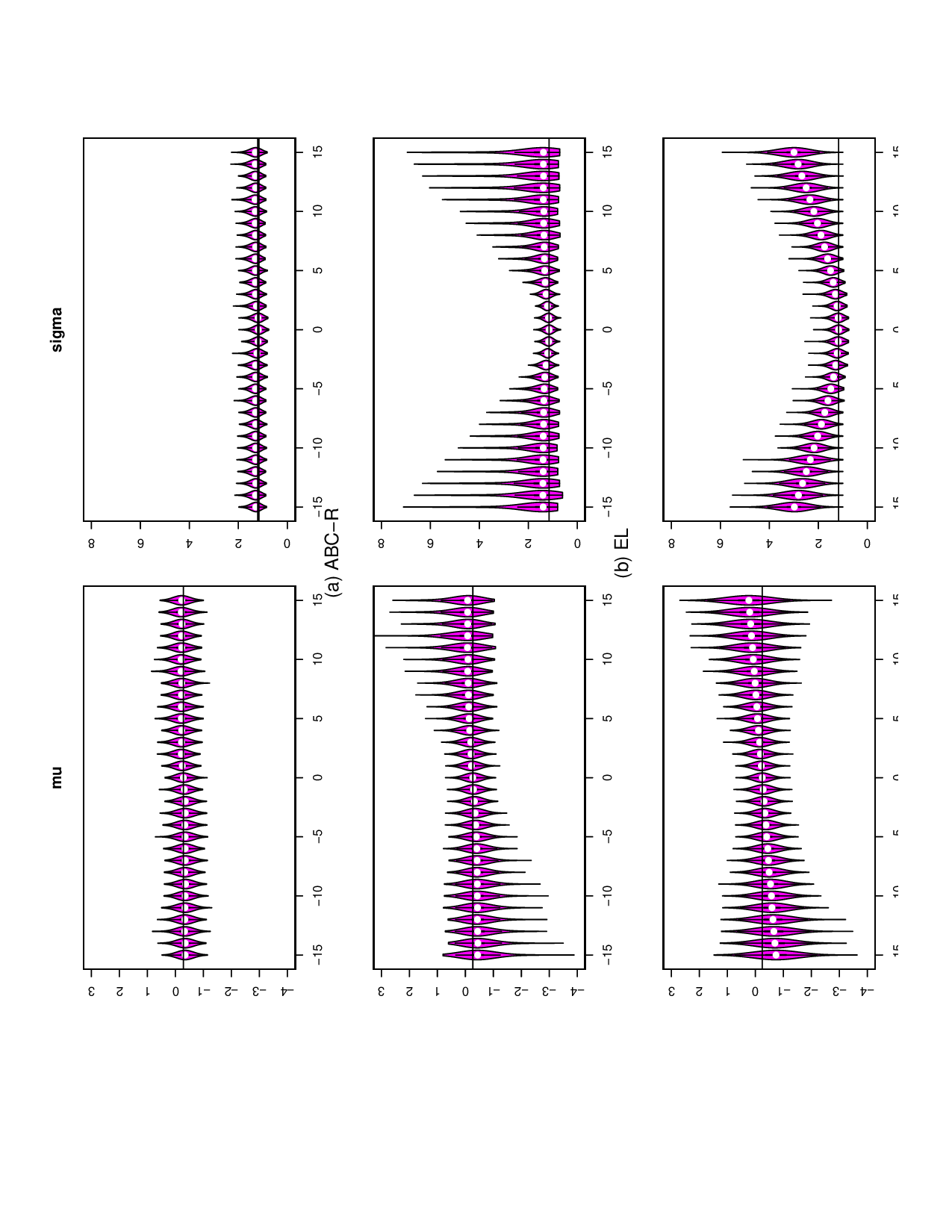}
	\vspace{-0.2cm}
	\caption{Sensitivity analysis for marginal ABC-R (a), EL (b) and genuine (c) posteriors for $\mu$ (left columns) and $\sigma$ (right) represented by means of violin plots. For each violin plot, the central circle represents the posterior median. The horizontal lines denote the corresponding posterior medians under $y^w$ with $w=0$.}
	\label{fig:fig2}
\end{figure}

\section{Application to linear mixed models}
\label{sec:application}

Linear mixed models (LMM) are a popular choice when analysing data in the context of hierarchical, longitudinal or repeated measures. A general formulation  is \begin{equation}
y = X\alpha + \sum_{i= 1}^{c-1} Z_i\beta_i + \varepsilon\,,
\label{eq:model}
\end{equation}
where $y$ is a $n$-dimensional vector of response observations, $X$ and $Z_i$ are known $n\times q$ and $n\times p_i$ design matrices, $\alpha$ is a $q$-vector of unknown fixed effects, the $\beta_i$ are $p_i$-vectors of unobserved random effects $(1\leq i \leq c-1)$ and $\varepsilon$ is a vector of unobserved errors. The $p_i$ levels of each random effect $\beta_i$ are assumed to be independent with mean zero and variance $\sigma_i^2$. Moreover, each random error $\varepsilon_i$ is assumed to be independent with mean zero and variance $\sigma_c^2$ and $\beta_1,\ldots,\beta_{c-1}$ and $\varepsilon$ are assumed to be independent. 

Here we focus on the classical normal LMM, which assumes that $\varepsilon \sim N_n(0_n,\sigma_c^2I_n)$ and $\beta_i \sim N(0, \sigma_i^2)$,  $i=1, \dots, c-1$.  For a normal LMM, it follows that $Y$ is multivariate normal with $E(Y) = X\alpha$ and $\text{var}(Y) = V = \sum_{i = 1}^{c}\sigma_{i}^2Z_iZ_i^\T\,,$ where $Z_{c} = I_n$. We assume that the set of $d = q + c$ unknown parameters $\theta = (\alpha, \sigma^2) = (\alpha, \sigma_{1}^2,\ldots,\sigma_{c}^2)$ is identifiable.  The validity and performance of this LMM
requires strict adherence to the assumed model, which is usually chosen
because it simplifies the analyses and not because it fits exactly the
data at hand. The robust procedure discussed in this paper specifically
takes into account the fact that the normal model is only approximate and
then it produces statistical analyses that are stable with respect to
outliers, deviations from the model or model misspecifications.

Although the $n$ observations $y$ are not independent, if the random effects are nested, then independent subgroups of observations can be found. Indeed, in many situations, $y$ can be split into $g$ independent groups of observations $y_j$, $j = 1, \ldots,g$, and the log-likelihood is
\begin{equation}
\ell(\theta) = \log L(\theta) = - \frac{1}{2}\sum_{j = 1}^g \left\{\log\vert V_j\vert + (y_j-X_j\alpha)^\T V_j^{-1}(y_j - X_j\alpha)\right\}\,,\label{eq:lmmlik}
\end{equation}
where $(y_1,\ldots,y_g)$ and $X$ and $V$ are partitioned accordingly. Classical Bayesian inference for $\theta$ is based on  $\pi(\theta|y)\propto L(\theta) \, \pi(\theta)$, where $\pi(\theta)$ is a prior distribution for $\theta$. However, (\ref{eq:lmmlik}) can be very sensitive to model deviations (\citealp{richardson1995robust}, \citealp{richardson1997bounded}, \citealp{copt2006high}); see also results of the simulation study in Section~\ref{ssec:simulation}.

In the frequentist literature, there are two broad classes of estimators for robust estimation of Gaussian LMM: $M$-estimators \citep[see, e.g.,][and references therein]{richardson1995robust,richardson1997bounded} and $S$-estimators \citep{copt2006high}.
The latter are generally available for balanced designs whereas the formers can be applied to a wide variety of situations; for instance it can deal with unbalanced designs and robustness with respect to the design matrix \citep{richardson1997bounded}. 
In this work we focus on {\emph M}-estimators but it is worth stressing that the idea can be applied to $S$-estimators as well. Following \cite{richardson1995robust}, we focus on the system of {\emph M}-estimating equations
\begin{eqnarray}
&& X^\T V^{-1/2}\psi_{c_1}\left(r\right)=0\,,
\label{eq:richalpha}\\
&&
\psi_{c_2}\left(r\right)^\T V^{-1/2} Z_iZ_i^\T V^{-1/2}  \psi_{c_2}\left(r\right)
- \text{tr}(CPZ_iZ_i^\T)=0,\,i = 1,\ldots,c,
\label{eq:richsigma}
\end{eqnarray}
where $r = V^{-1/2}(y-X\alpha)$ is the vector or scaled marginal residuals, $C = E_\theta \left[\psi_{c_2}(R)\psi_{c_2}(R)^\T \right]$, with $R = V^{-1/2}(Y-X\alpha)$, $P = V^{-1} - V^{-1}X(X^\T V^{-1}X)^{-1}X^\T V^{-1}$ and $\text{tr}(\cdot)$ is the trace operator. The function $\text{tr}(CPZ_iZ_i)$ is a  correction factor needed to ensure consistency at the Gaussian model for each $i=1,\ldots,c$. Equations  (\ref{eq:richalpha})-(\ref{eq:richsigma}) are called robust REML II estimating equations and are bounded versions of restricted likelihood equations. \cite{richardson1997bounded} shows that the $M$-estimator  based on (\ref{eq:richalpha})-(\ref{eq:richsigma}) is asymptotically normal with mean equal to the true parameter $\theta$ and covariance matrix of the form (\ref{var}).  The ABC-R procedure in the normal LMM based on (\ref{eq:richalpha})-(\ref{eq:richsigma}) will be studied by means of simulations in Section~\ref{ssec:simulation} and then applied to a dataset from a clinical study in Section~\ref{sec:antitumour}.

\subsection{Simulation study}
\label{ssec:simulation}

Let us consider the two-component nested model
\begin{equation}
y_{ij} = \mu + \alpha_j + \beta_i + \varepsilon_{ij}\,,\label{eq:modsim}
\end{equation}
where $\mu$ is the grand mean, $\alpha_j$ are the fixed effects, constrained such that $\sum_{j=1}^q \alpha_j=0$, $\beta_i\sim N(0,\sigma_1^2)$ are the random effects and $\varepsilon_{ij}\sim N(0,\sigma_2^2)$ is the residual term, for $j = 1,\ldots,q$ and $i = 1,\ldots,g$. Model (\ref{eq:modsim}) is a particular case of (\ref{eq:model}) with $c = 2$, a single random effect $\beta_1$ with $p_1 = g$ levels and $Z_1$ the unit diagonal matrix. Moreover, the covariate is a categorical variable with $q$ levels; hence the design matrix is given by $q-1$ dummy variables.

We assess the properties of the proposed method via simulations with 500 Monte Carlo replications. For each Monte Carlo replication, the true values for $(\sigma_1^2,\sigma^2_2)$ and for $\alpha$ are drawn uniformly in $(1,10)\times(1,10)$ and $(-5,5)$, respectively. With these values, two datasets of size $g$ are generated: one from the central model and one from the contaminated model $ (1-\delta)N(X_i^\T\alpha,V_i) + \delta N(X_i^\T\alpha, 15V_i)$, where $X_i$ is the matrix of covariates for the $i$th unit, $\theta = (\alpha, \sigma^2_1,\sigma_2^2)$ and $\delta = 0.10$. We consider $q = \{3, 5, 7\}$ and $g = \{30, 50, 70\}$. The prior distributions are $\alpha\sim N_q(0,10^2I_q)$ and $(\sigma_1^2,\sigma^2_2) \sim \text{halfCauchy}(7)\times\text{halfCauchy}(7)$. For each scenario, we fit model (\ref{eq:modsim}) in the classical Bayesian way, using an adaptive random walk Metropolis-Hastings algorithm. The same model is fitted by the ABC-R method using the estimating equations (\ref{eq:richalpha})-(\ref{eq:richsigma}).  As in \cite{richardson1995robust}, we set $c_1 = 1.345$ and $c_2 = 2.07$ and we find $\tilde{\theta}$ solving  (\ref{eq:richalpha})-(\ref{eq:richsigma}) iteratively until convergence. The classical REML estimate, computed by the function \texttt{lmer} of the \texttt{lme4} package, is used as starting value. In our experiments, the convergence of the solution is quite rapid, i.e. $\tilde{\theta}$ stabilises within 10--15 iterations.

We assess the component-wise bias of the posterior median $\tilde\theta_m$ by the modulus of $\tilde{\theta}_m-\theta_0$ in logarithmic scale, where $\theta_0$ is the true value. Moreover, the efficiency of the classical Bayesian estimator relative to the ABC-R estimator is assessed through the index $MD_{MCMC}/MD_{ABC}$, where $MD = \text{med}(|\tilde{\theta}_m-\theta_0|)$; see \cite{richardson1995robust} and \cite{copt2006high}. In addition, for each Monte Carlo replication we compute the Euclidean distance of $\tilde\theta_m$ from $\theta_0$, which can be considered as a global measure of bias. Contrary to \cite{richardson1995robust}, we consider a different $\theta_0$ for each Monte Carlo replication. The bias and efficiency of the classical Bayesian posterior and of the ABC-R posterior for the 500 replications are illustrated in Figures~\ref{fig:fig3} and \ref{fig:fig4}, respectively.

\begin{figure}[t!]
	\centering
\includegraphics[width=1.\textwidth, height = 1.1\textwidth]{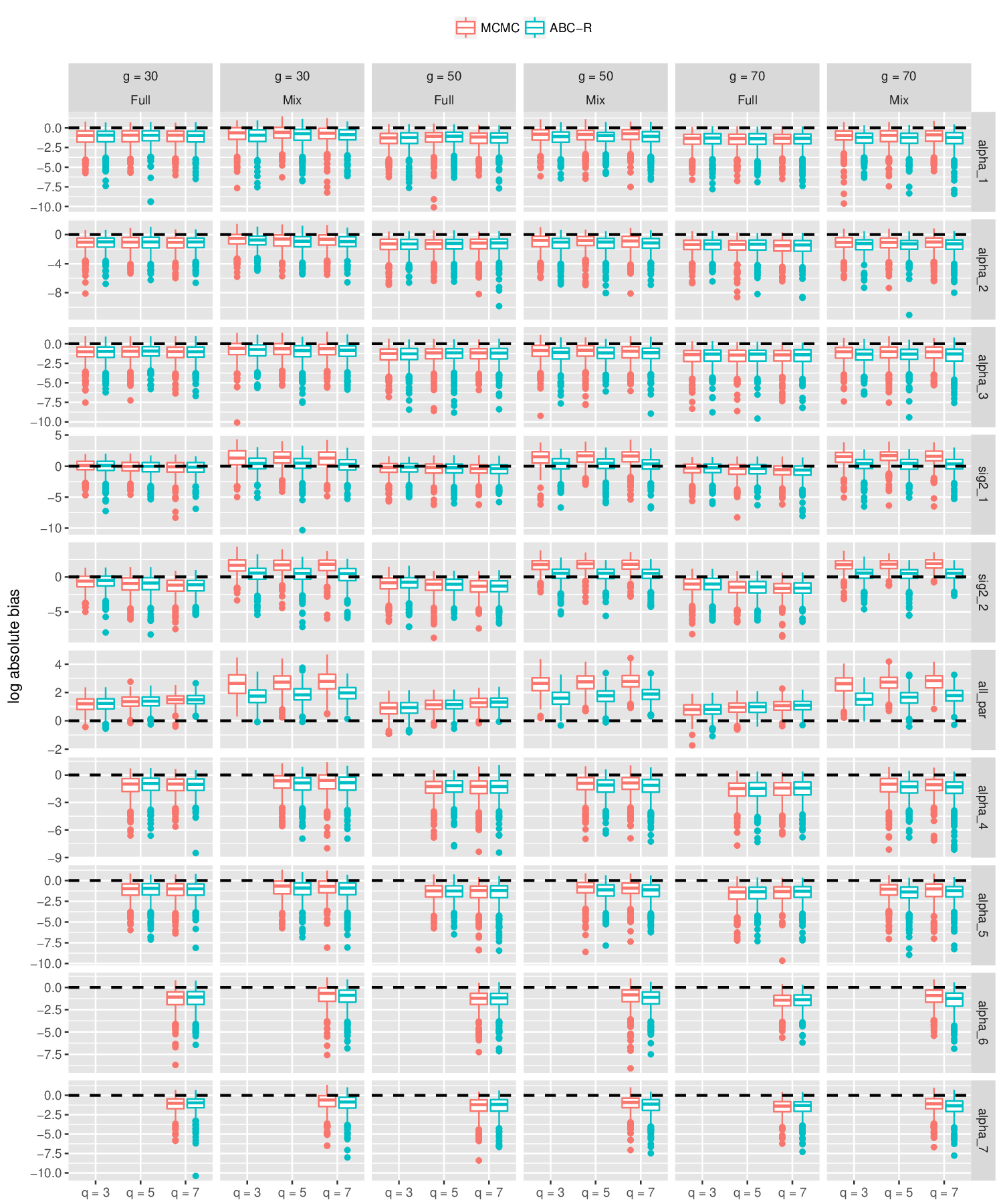}

	\caption{Bias of the ABC-R and classical (\texttt{MCMC}) Bayesian estimation of LMM under either the central (\texttt{Full}) or the contaminated model (\texttt{Mix}) for varying $g$ and $q$. Rows refer to a parameter or combination of parameters (row \texttt{all\_par}); columns within each cell refer to different vales of $q$; e.g. the last two rows (starting from top) have only two boxplots since $\alpha_6$ and $\alpha_7$ are available only with $q = 7$.}
	\label{fig:fig3}
\end{figure}

\begin{figure}[ht!]
	\centering
	\includegraphics[width=1.\textwidth, height = 1.1\textwidth]{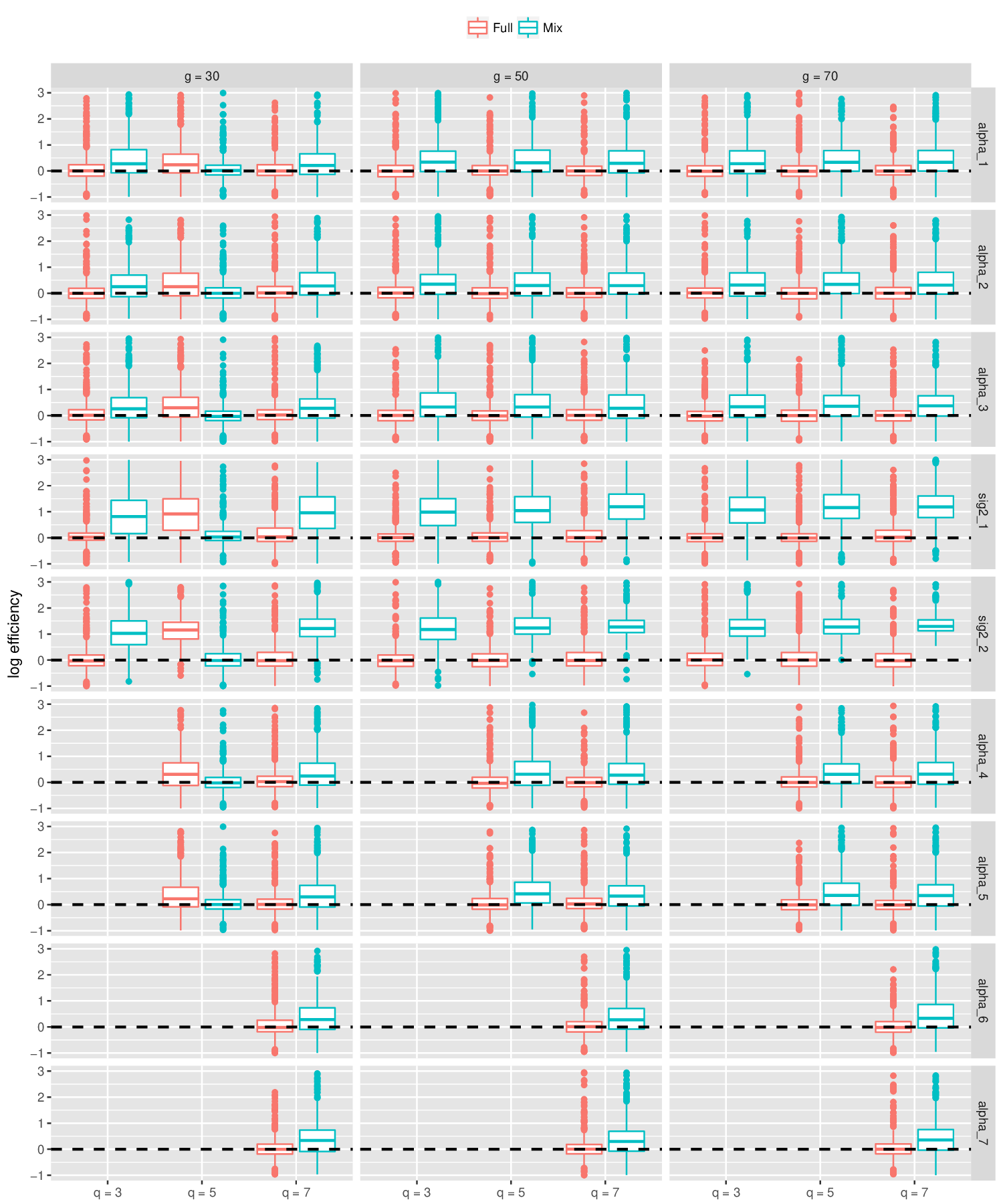}
	\caption{Efficiency of the ABC-R compared to the classical Bayesian estimation of LMM under the central (\texttt{Full}) and the contaminated models (\texttt{Mix}) for varying $g$ and $q$. Rows refer to a parameter and columns within each cell refer to different vales of $q$; e.g. the last two rows (starting from top) have only two boxplots since $\alpha_6$ and $\alpha_7$ are available only with $q = 7$.}
	\label{fig:fig4}
\end{figure}

Under the central model, inference with the ABC-R and the classical Bayesian posteriors is roughly similar, i.e.\ both bias and efficiency compare equally well across the two methods. This holds both for the fixed effects $\alpha$ and for the variance components $(\sigma_1^2,\sigma^2_2)$. Under the contaminated model, we notice important differences among ABC-R and the classical Bayesian estimation. In particular, $\tilde{\theta}_m$ based on ABC-R is less biased, both globally and on a component by component basis, and more efficient. The gain in efficiency is particularly evident for the variance components.

\subsection{Effects of GRP94-based complexes on IL-10}
\label{sec:antitumour}

The \texttt{GRP94} dataset \citep{tramentozzi2016} concerns the measurement of  glucose-regulated protein94 in plasma or other biological fluids and the study of its role as a tumour antigen, i.e. its ability to alter the production of immunoglobines (IgGs) and inflammatory cytokines in the peripheral blood mononuclear cells (PBMCs) of tumour patients. The study involved 27 patients admitted to the division of General Surgery of the Civil Hospital of Padova for ablation of primary, solid cancer of the gastro-intestinal tract. For each patient, gender, age (expressed in years), type and stage of tumour (ordinal scales of four levels) are given. Patients' plasma and PBMCs were challenged with GRP94 complexes and the level of IgG and of the cytokines: interferon$\gamma$ (IFN$\gamma$), interleukin 6 (IL-6), interleukin 10 (IL-10) and tumour necrosis factor $\alpha$ (TNF$\alpha$) were measured. Owing to time and cost constraints, for patients IDs 17, 27 and 28 only IgG was measured. The following five treatments were considered: GRP94 at the dose of either 10 ng/ml or 100 ng/ml, GRP94 in complex with IgG (\texttt{GRP94+IgG}) at the doses 10 ng/ml or 100 ng/ml and IgG a the dose 100 ng/ml. Finally, baseline measurements of IgG and of the aforementioned cytokines were taken from untreated PMBCs. Although fresh patient's plasma and PMBCs are taken for each treatment and patient, the resulting measures are likely to be correlated since plasma and PMBCs are taken from the same patient. Hence, a LMM can be suitable for these data. Using paired Mann-Whitney tests, \cite{tramentozzi2016} show that GRP94 in complex with IgG at the higher dose can significantly inhibit the production of IgG, whereas GRP94 at both doses can stimulate the secretion of IL-6 and TNF$\alpha$ from PBMCs of cancer patients. In addition, some of the differences between treatments were significant for a specific gender; see \cite{tramentozzi2016} for full details.

A feature of these data is the presence of extreme observations, both at baseline and challenged PMBCs-based measurements, as it can be seen from the strip plots in Figure \ref{fig:eda01}. Such extreme observations induce high variability on the response measurements, especially for IFN$\gamma$, IL-6, IL-10 and TNF$\alpha$. Hence, one must be cautious when fitting a LMM to such data. 
\begin{figure}\centerline{
	\includegraphics[width=135mm]{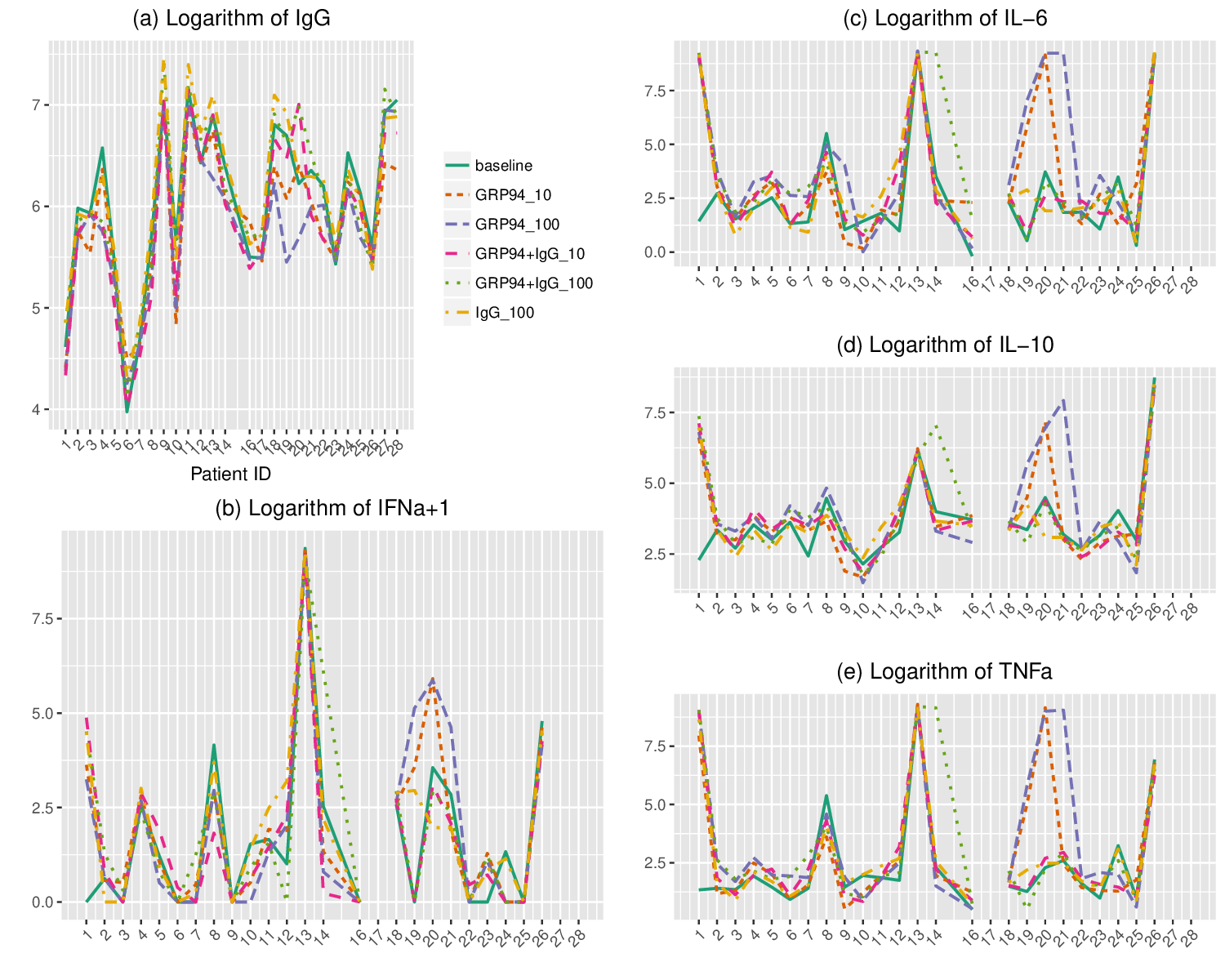}}
	\caption{Strip plots of IgG, IFN$\gamma$, IL-6, IL-10 and TNF$\alpha$ (in logarithmic scale) measured from PBMCs at baseline and after challenging with complexes of GRP94 and IgG. Values on the horizontal axis are (arbitrarily) ordered according to patient ID. Patient ID 15 was removed for clinical reasons and cytokines' measurements for patients with ID 17, 27 and 28 are missing.}
	\label{fig:eda01}
\end{figure}

We fit the two-component nested LMM (\ref{eq:modsim}) to the IL-10 with ABC-R using estimating equations (\ref{eq:richalpha})-(\ref{eq:richsigma}). Since all measures are positive and some of them are highly skewed, a logarithmic transformation is used in order to alleviate distributional skewness. Furthermore, since \cite{tramentozzi2016} highlight a possible gender effect (especially with respect to the cytokines) we also check for gender effects by including an interaction with gender. The model with interaction is
\begin{equation}
y_{i} = X_{i}^\T\alpha + X_i^\T\times\text{w}_i\gamma +\beta_i{1}_6 + \varepsilon_{i}\,,\quad i = 1,\ldots, 24,\label{eq:modinteract}
\end{equation}
where w$_i$ is a dummy variable for gender, $\gamma$ is the fixed effect of the treatment-gender interaction, and ${1_6}$ is the unit vector of dimension 6. The interaction model (\ref{eq:modinteract}) has $12$ unknown fixed effects $(\alpha,\gamma)$.

As in this case there is no extra-experimental information, we assume vague priors. In particular, $\alpha_j\sim N(0,100)$ and $\gamma_j\sim N(0,100)$, for $j = 1,\ldots,6$. For the variance components, following \cite{gelman2006prior}, we assume $\sigma_{1}^2\sim \text{halfCauchy}(7)$ and $\sigma_{2}^2\sim \text{halfCauchy}(7)$ in both models.  However, we note that one of the features of the proposed method is the simultaneous ability to have robustness to possible model misspecification and to include prior information on model parameters, if available.

ABC-R posterior samples are drawn using Algorithm~1. For comparison purposes, we fit also a classical Bayesian LMM with the aforementioned prior and an adaptive random walk Metropolis-Hastings algorithm is used for sampling from this posterior. Figure~\ref{fig:lmmfit} compares the ABC-R and the classical posterior for a subset of the fixed effects of models (\ref{eq:modsim}) and (\ref{eq:modinteract}) by means of kernel density estimations. The parameters shown are those referring to the treatments based on GRP94 at the dose of 10 ng/ml (\texttt{GRP94\_10}), GRP94 at the dose of 100 ng/ml (\texttt{GRP94\_100}) and GRP94 in complex with IgG at the dose of 100 ng/ml (\texttt{GRP94+IgG\_100}), which according to \cite{tramentozzi2016} are the most prominent. The first row (d1) illustrates the marginal posteriors of the parameters of (\ref{eq:modsim}) (with \texttt{baseline} being  the reference category). The second row (d2) shows the marginal posteriors of the parameters of (\ref{eq:modinteract}) (with \texttt{baseline} and \texttt{female} being the reference categories). Numbers within parenthesis in the plot subtitles give the evidence in favour of the null hypothesis H$_0$ that the parameter is equal to zero, computed under the Full Bayesian Significance Testing (FBST) setting of \cite{pereira2008can}; inside the parenthesis, the first (last) value from left refers to the ABC-R (classical) posterior. 

The FBST in favour of $H_0$ has been proposed by \cite{pereira1999evidence} as an  intuitive measure of evidence, defined as the posterior probability related to the less probable points of the parametric space. It favours $H_0$ whenever it is large and it is based on a specific loss function and thus the decision made under this procedure is the action that minimises the corresponding posterior risk \citep{pereira2008can}. The FBST solves the drawback of the usual Bayesian procedure for testing based on the Bayes factor (BF), that is, when the null hypothesis is precise and improper or vague priors are assumed, the BF can be undetermined  and it can lead to the so-called Jeffreys-Lindley paradox.

\begin{figure}[ht!]
\centering
%
%
\includegraphics[width =0.9\columnwidth,height=0.8\columnwidth]{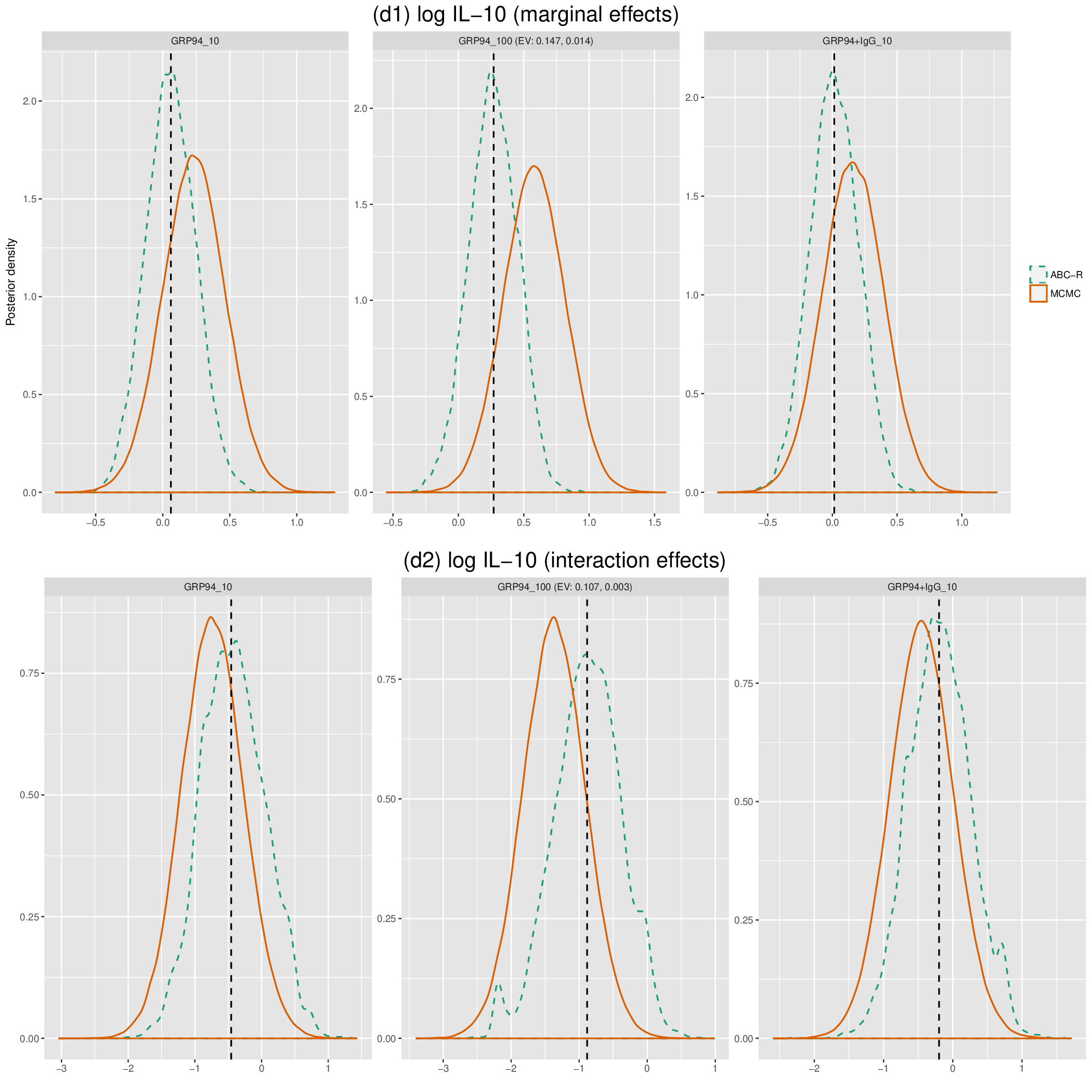}
\vspace{0.2cm}
\caption{Comparison of robust (ABC-R) and full (MCMC) posterior distributions of the fixed effects of the LMM  without interaction with gender (\ref{eq:modsim}) and with interaction (\ref{eq:modinteract}), fitted to the $\log$ IL-10. The first row refers to the posterior of the effects of the treatments against the baseline without interaction; the second refers to the posterior considering interactions of the treatments with gender (with \texttt{baseline} and \texttt{female} being the reference categories). Numbers within parenthesis refer to the FBST evidence in favour of H$_0$ that the parameter is equal to zero; inside the parenthesis, the first (last) value from left refers to the ABC-R (classical) posterior. Dashed vertical lines correspond to components of $\tilde{\theta}$.}
	\label{fig:lmmfit}
\end{figure}


There is a high posterior probability that the effect of \texttt{GRP94\_100}  with or without interaction with gender is different from the baseline, since the evidence of H$_0$ is rather low under the classical Bayesian LMM. However, such effects vanish under the robust ABC-R procedure. This is an indication to the fact that the classical LMM posterior in the case of log IL-10 is likely to be driven by few extreme observations.

\section{Discussion}
\label{s:discuss}
Currently, the only available approach for obtaining posterior distributions explicitly using robust unbiased estimating functions is through pseudo-likelihood methods such as the empirical or the quasi-likelihood \citep{greco2008robust}. \cite{bissiri2016general} show how robust posterior distribution can be based on generic loss functions, in some special cases derived from robust estimating equations. In this work, we present an alternative approach that directly incorporates robust estimating functions into approximate Bayesian computation techniques. With respect to available approaches based on pseudo-likelihoods, our method can be computationally faster when the evaluation of the estimating function is expensive.

Motivated by the \texttt{GRP94} dataset, we focused on two-component nested LMM, but more complex models can be fitted since the estimating equations (\ref{eq:richalpha})-(\ref{eq:richsigma}) are very general \citep[see][]{richardson1997bounded}. For instance, it is possible to deal with models with multiple random effects or even with robustness with respect to the design matrix. An {\tt R} implementation of the proposed method is provided in the {\tt robustBLME} package \citep{robustblme}.

The proposed method can be applied to any unbiased robust estimating equations, such as $S$-estimating equations. The study of the proposed approach with \emph{S}-estimating in the proposed approach is left for future work.

From a practical perspective we recommend to fit both classical and robust LMMs and compare their posteriors, say by FSBT. If the differences are mild then the posterior is probably not impacted by outliers so the classical LMM can be safely used. On the contrary, if there are important differences between them, then it is likely that the LMM posterior is driven by outliers and therefore the robust posterior would be a safer choice.

\section*{Acknowledgements}

This work was partially supported by University of Padova (Progetti di Ricerca di Ateneo 2015, \texttt{CPDA153257}) and by the Italian Ministry of Eduction under the PRIN 2015 grant (\texttt{2015EASZFS\_003}).

\section*{Appendix: Computational details}

Provided simulation from $F_\theta$ is fast, the main demanding requirement of the proposed method is essentially the computation of the observed $\tilde{\theta}$ and the scaling matrix $B_R(\theta)$ evaluated at $\tilde{\theta}$. Given that, for large sample sizes, 
$$
\eta_R (y;\theta) \sim N_d(0_d, I_d)\,,
$$ 
where $0_d$ is a $d$-vector of zeros and $I_d$ is the identity matrix of order $d$, it is reasonable to replace $K_h(\cdot)$ with the multivariate normal density centred at zero and with covariance matrix $hI_d$. In order to choose the bandwidth $h$ we consider several pilot runs of the ABC-R algorithm for a grid of $h$ values, and select the value of $h$ that delivers approximately 0.1\% acceptance ratio (as done, for instance, by \citealp{fearnhead2012constructing}). 

Contrary to other ABC-MCMC algorithms in which the proposal requires pilot runs (see, \citealp{cabras2015}, for building proposal distributions in ABC-MCMC), in our case a scaling matrix for the proposal $q(\cdot|\cdot)$ can be readily build, almost effortlessly, by using the usual sandwich formula (2) evaluated at $\tilde\theta$ (see also \citealp{ruli2016approximate}). Even in cases in which $H(\theta)$ and $J(\theta)$ are not analytically available, they can be straightforwardly estimated via simulation. Indeed, in our experience, 100-500 samples from the model $F_{\tilde\theta}$, give estimates with reasonably low Monte Carlo variability (see also \citealp{cattelan2015empirical}). Throughout the examples considered we use the multivariate $t$-density with 5 degrees of freedom as the proposal density $q(\cdot|\cdot)$ and the ABC-R is always started from $\tilde{\theta}$.  In the ABC algorithm, we fix the tolerance threshold in order to give a pre-specified but small acceptance ratio, as frequently done in the ABC literature.  In our experimentations we found that an acceptance value of 0.1\% gives satisfactory results.

\bibliography{mybibilo}


\end{document}